\documentclass[12pt,reqno]{amsart}
\usepackage{amsthm,amsfonts,amssymb,euscript}

\newcommand{\bea}{\begin{eqnarray}}
\newcommand{\eea}{\end{eqnarray}}
\def\beaa{\begin{eqnarray*}}
\def\eeaa{\end{eqnarray*}}
\def\ba{\begin{array}}
\def\ea{\end{array}}
\def\be#1{\begin{equation} \label{#1}}
\def \eeq{\end{equation}}

\def\be{{\beta}}

\def\de{\delta}

\def\ep{\epsilon}
\def\eps{\epsilon}

\def\Si{\Sigma}

\def\varep{\varepsilon}

\def\pr{{\partial}}
\def\al{\alpha}
\def\les{\lesssim}

\def\g{\mathbf{g}}

\def\II{{\mathcal I}}

\def\HH{{\mathcal H}}

\def\SS{{\mathcal S}}

\def\D{{\bf D}}

\def\M{{\bf M}}

\def\Z{{\bf Z}}

\def\T{{\bf T}}
\def\E{{\bf E}}

\def\g{{\bf g}}

\def\f12{{\frac 1 2}}

\def\f{\widetilde{f}}

\newtheorem{theorem}{Theorem}[section]
\newtheorem{lemma}[theorem]{Lemma}
\newtheorem{proposition}[theorem]{Proposition}

\setlength{\textwidth}{16cm} \setlength{\oddsidemargin}{0cm}
\setlength{\evensidemargin}{0cm}
\numberwithin{equation}{section}

\begin{document}\title[Stationary black holes in vacuum]{Rigidity of stationary black holes  with small angular momentum  
on the horizon}
\author{S. Alexakis}
\address{University of Toronto}
\email{alexakis@math.toronto.edu}
\author{A. D. Ionescu}
\address{Princeton University}
\email{aionescu@math.princeton.edu}
\author{S. Klainerman}
\address{Princeton University}
\email{seri@math.princeton.edu}
\thanks{The first author was partially supported by NSERC grants 488916 and 489103, and a Sloan fellowship. 
The second author was partially supported by
a Packard Fellowship.  The third author was  partially supported by  
NSF grant     DMS-2156449.  All authors were also supported by NSF-FRG grant DMS-1065710.
 }

\begin{abstract}
We prove a black hole rigidity result for slowly rotating stationary solutions of the Einstein vacuum equations. 
More precisely, we prove that the domain of outer communications of a regular stationary vacuum is isometric to the 
domain of outer communications of a Kerr solution, provided that the stationary Killing vector-field $\T$ is  small 
on the bifurcation sphere, i.e.  the corresponding   Black Hole  has  small  angular momentum.
  No other global  restrictions  are necessary. The proof  brings together ideas from   our previous work \cite{AlIoKl} 
with ideas from the classical work of Sudarsky and Wald \cite{Su-Wa} 
on the staticity of stationary black hole solutions 
with zero angular momentum on the horizon.  It is thus  the first  uniqueness    result, in the  framework of smooth, 
asymptotically flat,   stationary solutions,   which combines local considerations near the horizon, via Carleman estimates, 
with information obtained by    global  elliptic   estimates.

 \end{abstract}
 \maketitle
  \tableofcontents

\section{Introduction}\label{introduction}

In this paper we prove a new black hole rigidity result 
for slowly rotating stationary solutions of the Einstein  vacuum  equations. 
More precisely, we show that the domain of outer communications of any smooth  stationary regular vacuum black hole 
with the stationary Killing vector-field $\T$ being small on the 
bifurcation sphere of the horizon must be isometric to the domain of outer communications of a Kerr solution $K(a,M)$ with 
small angular momentum $a/M$.
  This should be compared with  our previous result in \cite {AlIoKl2}  in which  rigidity 
was proved, for the entire range $0\le a <M$, under a  global smallness  assumption on the Mars-Simon tensor associated
 to  the stationary  space-time.  That result rested on three
important  ingredients:
\begin{enumerate}
\item  An unconditional local rigidity result, established in \cite{AlIoKl}  (see also \cite{IoKl3}  )  according to which  a second, rotational Killing vector-field  $\Z$ can be constructed in a small neighborhood of the 
 bifurcate  sphere of the horizon.
 \item An extension argument  for the  Killing vector-field $\Z$ based on    a   global  foliation of the  space-time  with   $\T$- conditional pseudo-convex    hypersurfaces.   
 The crucial $\T$- conditional pseudo-convexity  condition is ensured by the  assumed smallness of the  Mars-Simon tensor.
 \item Once $\Z$ is  globally extended, and thus the space-time is shown  to be  both stationary  and axisymmetric, one can appeal to the classical 
 Carter-Robinson theorem  to conclude the desired  rigidity. 
\end{enumerate} 

The  result we present here is still based on the  first and third ingredients above but replaces  the second  one with  a new ingredient 
inspired  from    the classical work of Sudarsky and Wald \cite{Su-Wa} (see also \cite{Ca2})      on the staticity of stationary, axially 
symmetric\footnote{The result assumes in fact  analyticity   of the space-time  which, according to the well known result of Hawking, 
  implies  axisymmetry.      },  black hole solutions 
with zero angular momentum.  Their result  was based on a   simple  integral formula linking the total extrinsic curvature of   a regular maximal  hypersurface      $\Sigma$ imbedded   in    the space-time and passing through the bifurcate sphere,      with the angular momentum of the horizon.    It can be easily shown\footnote{This step 
 is based on   the assumption of axial symmetry.}           that    zero ADM angular momentum
implies  vanishing angular momentum of the horizon and thus, in view of the above mentioned  formula,   the maximal hyper-surface has to be totally geodesic.  This then implies   the desired conclusion of \cite{Su-Wa}, 
 i.e the space-time  is static. The main observation of our result here  is that    a simple smallness   assumption of $\T$ on the bifurcate sphere\footnote{ This is  equivalent 
with  a   small angular momentum assumption   on the horizon.  It remains open whether this   condition can be replaced with 
  a smallness assumption  of the ADM angular momentum.}  implies  the smallness 
of  the total curvature of the  maximal  hypersurface. This can then be combined   with a  simple application of the  classical    Hopf   Lemma 
to conclude that the entire ergo-region of the black hole     can be covered  by   the   local    neighborhood of the horizon in which the 
second,  rotational, Killing vector-field  $\Z$ has been extended, according to step (1) above.   Away  from the ergo-region $\T$ is time-like   and thus   
 $\T$-conditional  pseudo-convexity  is automatically  satisfied. Thus,  the second Killing  vector-field   $\Z$ can   be easily extended      to the entire space-time  by the results   of \cite{IoKl},        \cite{IoKl2}, \cite{AlIoKl2}.
 Alternatively , since $\T$  is  time-like   in the  complement of   the ergo-region,    the  metric    must   be real analytic in appropriate coordinates, see \cite{Mu}.
  The extension     of $\Z$ can then  be simply  done using  the classical   results  of  \cite{No}.

\subsection{Main Theorem} 
Our result depends on  four   types of assumptions.
\begin{enumerate}
\item    Standard        global  regularity and asymptotic flatness   assumptions  concerning  the stationary space-time  $(\M,\g)$.
\item  Assumptions on the non-degeneracy   of the horizon. 
\item  Assumptions on the existence  and regularity   of an asymptotically  flat  maximal hypersurface passing through the bifurcate sphere  of the horizon.
\item Smallness of the stationary  Killing vectorfield $\T$ on  the bifurcate sphere of the horizon.
\end{enumerate}
\subsubsection{Main Objects}  We  assume that  $(\M,\g)$ is a smooth\footnote{$\M$ is a connected, oriented, time 
oriented, paracompact $C^\infty$ manifold without boundary.} vacuum Einstein space-time of 
dimension $3+1$ and $\T\in T(\M)$ is a smooth Killing vector-field on $\M$. 
We also assume that we are given an embedded partial Cauchy surface
$\Sigma^0\subseteq\M$ and a diffeomorphism $\Phi_0:E_{1/2}\to\Sigma^0$, 
where $E_r=\{x\in\mathbb{R}^3:|x|>r\}$. Moreover, we assume that 
\begin{equation}\label{maxim}
\Sigma_1:=\Phi_0(E_1) \text{ is a maximal hypersurface}.
\end{equation}
The existence of 
asymptoticaly flat maximal surfaces  with $\partial\Sigma_1=\SS_0$ 
in stationary space-times has been derived in \cite{CW} (see theorem 4.2.). 
The required smoothness and decay at spatial infinity that we assume below can be proved by elliptic estimates; these are however 
not the purpose of this paper, so we include them as assumptions.

\subsubsection{Main regularity assumptions}  The regularity assumptions on our space-time, the stationary Killing field, and the bifurcate event horizon 
are precisely as in \cite{AlIoKl2}.
The first assumption is a standard asymptotic flatness
assumption which, in particular,  defines the asymptotic region
$\M^{(end)}$ and the domain of outer communications (exterior
region) $ \E=\II^{-}(\M^{(end)})\cap\II^{+}(\M^{(end)})$. Our  second assumption  concerns the smoothness of the two achronal  boundaries $\delta(\II^{-}(\M^{(end)}))$ in a 
small neighborhood of their intersection $S_0=\de(\II^{-}(\M^{(end)}))\cap \de(\II^{+}(\M^{(end)}))$. Though this second assumption is not directly used here it was  very important in the  local   construction of  a rotational Killing vector-field in \cite{AlIoKl},   see Theorem \ref{old.theorem} below.
\medskip

{\bf{GR.}} (Global regularity assumption)
 We assume that  the restriction of the
 diffeomorphism $\Phi_0$ to $E_{R_0}$,  for $R_0$ sufficiently
large,  extends to a diffeomorphism $\Phi_0:\mathbb{R}\times
E_{R_0}\to \M^{(end)}$, where $\M^{(end)}$ (asymptotic region) is
an open subset of $\M$. In local coordinates $\{x^0,x^i\}$ defined
by this diffeomorphism, we assume that $\T=\partial_0$ and, with
$r=\sqrt{(x^1)^2+(x^2)^2+(x^3)^2}$, that the components of the
space-time metric verify\footnote{\label{foot:asympt}We denote by
 $O_k(r^a)$ any smooth function in  $\M^{(end)}$ which verifies $|\partial^i
f|=O(r^{a-i})$  for any $0\le i\le k$ with $|\partial^i
f|=\sum_{i_0+i_1+i_2+i_3=i}
|\partial_0^{i_0}\partial_1^{i_1}\partial_2^{i_2}\partial_3^{i_3}
f|$.},
\begin{equation}\label{As-Flat}
\g_{00}=-1+\frac{2M}{r}+O_6(r^{-2}),\quad \g_{ij}=\delta_{ij}+O_6(r^{-1}),\quad\g_{0i}=-\ep_{ijk}\frac{2S^jx^k}{r^3}+O_6(r^{-3}),
\end{equation}
for some $M>0$, $S^1,S^2,S^3\in\mathbb{R}$  such that,
\begin{equation}\label{As-Flat2}
J=[(S^1)^2+(S^2)^2+(S^3)^2]^{1/2}\in[0,M^2).
\end{equation}

Let
\begin{equation*}
\E=\II^{-}(\M^{(end)})\cap\II^{+}(\M^{(end)}),
\end{equation*}
where $\II^{-}(\M^{(end)})$, $\II^{+}(\M^{(end)})$ denote the past and
respectively  future
 sets of  $\M^{(end)}$. We assume that $\E$ is globally hyperbolic and
\begin{equation}\label{intersect}
\Sigma^0\cap\II^{-}(\M^{(end)})=\Sigma^0\cap\II^+(\M^{(end)})=\Phi_0(E_1).
\end{equation}
We assume that $\T$ does not vanish at any point of $\E$ and that
every orbit of $\T$ in $\E$ is complete and intersects transversally the
hypersurface $\Sigma^0\cap \E$.
\medskip

{\bf{SBS.}} (Smooth bifurcation sphere assumption) It follows from \eqref{intersect} that
\begin{equation*}
\delta(\II^{-}(\M^{(end)}))\cap\Sigma^0=\delta(\II^+(\M^{(end)}))\cap\Sigma^0=\SS_0,
\end{equation*}
where $\SS_0=\Phi_0(\{x\in\mathbb{R}^3:|x|=1\})$ is an imbedded $2$-sphere (called 
the bifurcation sphere). We assume that there is a neighborhood $\mathbf{O}$ of $\SS_0$ in $\mathbf{M}$ such that the sets
\begin{equation*}
\HH^+=\mathbf{O}\cap \delta(\II^{-}(\M^{(end)}))\quad \text{ and }\quad \HH^-=\mathbf{O}\cap \delta(\II^{+}(\M^{(end)}))
\end{equation*}
are smooth imbedded hypersurfaces. We assume that these hypersurfaces are null, 
non-expanding\footnote{A null hypersurface is said to be non-expanding if the trace of its null 
second fundamental form vanishes identically.}, and intersect transversally in $\SS_0$. Finally, 
we assume that the vector-field $\T$ is tangent to both 
hypersurfaces $\HH^+$ and $\HH^-$, and does not vanish identically on $\SS_0$.
\medskip

\subsubsection{Main Theorem}
\begin{theorem}\label{theorem}
Assume $(\M,\g)$ is a regular black hole exterior satisfying the assumptions {\bf{GR}} and {\bf{SBS}} above, with a stationary Killing field $\T$. 
Assume    in  addition  that the hypersurface $\Sigma_1=\Phi_0(E_1)$ is maximal, and that
\begin{equation}\label{Tsmall}
\|\g(\T,\T)\|_{L^\infty(\SS_0)}<\epsilon^2,
\end{equation}
where $\epsilon$ is a sufficiently small constant\footnote{See subsection \ref{precisesmall} for a precise description of this smallness assumption.}. 
Then  $(M,\g)$ is stationary and axially symmetric,  
thus, in view of the Carter-Robinson theorem \cite{Ca1, Rob, ChCo}, it is isometric 
to a Kerr solution with small angular momentum. 
\end{theorem}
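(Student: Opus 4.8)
The plan is to combine the local rigidity result of \cite{AlIoKl} (stated below as Theorem \ref{old.theorem}) with a Sudarsky--Wald-type integral identity on the maximal hypersurface $\Sigma_1$, exploiting the smallness assumption \eqref{Tsmall} to control the global geometry. The first step is to invoke \cite{AlIoKl}: in a small neighborhood $\mathbf{O}'$ of $\SS_0$ there exists a rotational Killing vector-field $\Z$, commuting with $\T$, tangent to the horizon, and vanishing on $\SS_0$. This gives axisymmetry \emph{locally} near the bifurcation sphere; the whole difficulty is to propagate it to all of $\E$. The second step is to set up the elliptic theory on $\Sigma_1$: since $\Sigma_1$ is maximal and asymptotically flat with boundary $\SS_0$, one writes the second fundamental form $k$ of $\Sigma_1$ in $\M$ and uses the constraint equations together with stationarity to derive, following Sudarsky--Wald, an identity expressing a weighted $L^2$-norm of $k$ over $\Sigma_1$ in terms of boundary data on $\SS_0$ (angular-momentum-type integrals built from $\T$) and the ADM data at infinity. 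The key point, and the genuinely new observation, is that \eqref{Tsmall} forces these boundary terms to be $O(\epsilon)$, hence $\|k\|$ is small in a suitable norm; this in turn makes the lapse $N=\sqrt{-\g(\T,\T)|_{\Sigma_1}}$-type function satisfy an elliptic equation whose zero set can be analyzed.

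The third step is the Hopf-lemma argument. The ergoregion is the set $\{\g(\T,\T)\ge 0\}$; by \eqref{Tsmall} it touches $\Sigma_1$ only in a thin neighborhood of $\SS_0$. More precisely, the function $-\g(\T,\T)$ restricted to $\Sigma_1$ (or an associated lapse-like quantity) is non-negative, vanishes on $\SS_0$, and satisfies an elliptic inequality with coefficients controlled by the now-small $k$; the Hopf lemma plus the quantitative smallness lets one conclude that $\{\,\g(\T,\T)\ge -c\,\}\cap\Sigma_1$ is contained in the neighborhood $\mathbf{O}'\cap\Sigma_1$ where $\Z$ already exists. Flowing along $\T$ (which preserves $\g(\T,\T)$ since $\T$ is Killing), one gets that the \emph{entire} ergoregion of $\E$ is covered by the local neighborhood of the horizon on which $\Z$ is defined.

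The fourth step is the extension of $\Z$ to all of $\E$. On the complement of the ergoregion $\T$ is timelike, so every level-$\T$ hypersurface there is automatically $\T$-conditionally pseudo-convex, and the unique-continuation machinery of \cite{IoKl,IoKl2,AlIoKl2} (Carleman estimates for the wave operator applied to the Killing-extension tensor, together with the fact that $\Z$ and its relevant derivatives vanish on an open set inside the ergoregion) extends $\Z$ past every such hypersurface; a connectedness/continuity argument then yields $\Z$ globally on $\E$, commuting with $\T$. (Alternatively, real-analyticity of the metric in the timelike region, via \cite{Mu}, combined with the classical extension of \cite{No}, does the same job.) Once $\Z$ is global, $(\M,\g)$ is stationary and axisymmetric, and the Carter--Robinson theorem \cite{Ca1,Rob,ChCo} gives the isometry to a Kerr solution; the angular momentum is small because $J$ in \eqref{As-Flat2} is controlled by the horizon angular momentum, which is $O(\epsilon)$ by \eqref{Tsmall}. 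I expect the main obstacle to be step three: making the Hopf-lemma/quantitative-containment argument rigorous requires carefully choosing the right scalar quantity on $\Sigma_1$, showing its elliptic equation has the structure needed for a quantitative strong maximum principle, and matching the smallness in \eqref{Tsmall} (which is $C^0$ smallness of $\g(\T,\T)$ on a $2$-sphere) against the size of the neighborhood $\mathbf{O}'$ produced by \cite{AlIoKl}, which is where the precise meaning of ``$\epsilon$ sufficiently small'' in \eqref{Tsmall} must be pinned down.
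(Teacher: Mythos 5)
Your overall architecture coincides with the paper's: local rigidity near $\SS_0$ from \cite{AlIoKl}, a Sudarsky--Wald integral identity on the maximal slice to force smallness of $k$, a Hopf-lemma argument to trap the ergoregion inside the neighborhood where $\Z$ lives, and then extension of $\Z$ via $\T$-conditional pseudo-convexity (or analyticity plus \cite{No}) followed by Carter--Robinson. Steps one, two and four are essentially what the paper does; in particular your integral identity is exactly the paper's computation $n|k|^2=\nabla^i(X^jk_{ij})$, integrated over $\Sigma_1$ with boundary terms on $\SS_0$ controlled by \eqref{Tsmall}.

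The gap is in step three, and you have correctly flagged it as the main obstacle without resolving it. The scalar you propose to feed into the Hopf lemma --- $-\g(\T,\T)$ or $\sqrt{-\g(\T,\T)}$ on $\Sigma_1$ --- does not work: writing $\T=nT_0+X$ one has $\g(\T,\T)=-n^2+h(X,X)$, which is neither signed near $\SS_0$ nor a solution of a usable elliptic equation. The paper instead applies the maximum principle and the Hopf lemma to the lapse $n$ alone, which satisfies the clean equation $\Delta n=|k|^2n$ with $n=0$ on $\SS_0$ and $n\to 1$ at infinity; this yields the quantitative lower bound $n(p)\gtrsim (r(p)-1)/r(p)$ with a constant independent of $\epsilon$. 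The genuinely missing ingredient in your outline is the separate control of the shift vector $X$: one must show $|X|+|\nabla X|\leq\epsilon^{1/30}$ on all of $\Sigma_1\setminus\Sigma_{R_1}$, not just near $\SS_0$. This is done by observing that $X$ satisfies the approximate Killing equation $\nabla_iX_j+\nabla_jX_i=2nk_{ij}$ with small right-hand side (by the $k$-smallness already established), deducing smallness of $X$ and $\nabla X$ on $\SS_0$ from \eqref{Tsmall} by interpolation, and then propagating this along radial curves via the standard second-order transport system $\nabla_a\nabla_bX_c=X^dR_{dabc}+O(\epsilon^{1/20})$ --- a quantitative version of the fact that a Killing field vanishing to first order at a point vanishes identically. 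Only after combining the $\epsilon$-independent lower bound on $n$ with the $\epsilon$-small upper bound on $X$ does one get $\g(\T,\T)\leq-\epsilon$ outside $\Sigma_{1+\epsilon^{1/100}}$, hence containment of the ergoregion in $\Omega$. Without isolating $X$ and running this transport argument, the Hopf-lemma step as you describe it cannot close.
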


The proof of the theorem consists of two main steps: the first was done in our previous 
work \cite{AlIoKl} where we proved that there exists,  locally  in a small neighborhood of the event horizon,    a second, rotational 
Killing vector field $\Z$  commuting  with $\T$. 
More precisely our result in \cite{AlIoKl} can be stated as follows: 

\begin{theorem}\label{old.theorem}
Under the assumptions above, there exists an open set $\Omega\subset \M$, $\mathcal{S}_0\subseteq\Omega$,  
where $(M,\g)$ admits a second rotational Killing vector-field $\Z$ which commutes with $\T$. 
\end{theorem}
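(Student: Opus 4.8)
The statement is precisely the local rigidity theorem of \cite{AlIoKl}, so the plan is to recall its proof, which proceeds in two stages: first constructing $\Z$ on the bifurcate horizon $\HH=\HH^+\cup\HH^-$ together with the vanishing of its deformation tensor ${}^{(\Z)}\pi:=\Lie_\Z\g$ there, and then extending $\Z$ to a full neighborhood $\Omega$ of $\SS_0$ by a unique continuation argument.

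For the horizon construction I would first observe that, since $\T$ is Killing and tangent to both $\HH^+$ and $\HH^-$, it is tangent to $\SS_0=\HH^+\cap\HH^-$, and since the isometry flow of $\T$ preserves $\SS_0$, the restriction $\T|_{\SS_0}$ is a Killing field of the induced $2$-sphere metric $\gamma$. A nontrivial Killing field on a $2$-sphere is a rotation, so $\gamma$ is axisymmetric and $\T|_{\SS_0}=a\,\Z_0$ for a constant $a$ and the canonical rotation $\Z_0$ with $2\pi$-periodic orbits; I set $\Z:=\Z_0$ on $\SS_0$. Next I would propagate $\Z$ off $\SS_0$ along the two null legs of $\HH$ by transport equations chosen so that $[\T,\Z]=0$ and ${}^{(\Z)}\pi\equiv 0$ on $\HH$; the consistency of this over-determined propagation — and the existence of a constant $\Omega$ with $\T+\Omega\Z$ proportional to the null generators — follows from the vacuum equations together with the non-expanding (isolated-horizon) structure of $\HH^\pm$ and the hypothesis that $\T$ is Killing.

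To pass from $\HH$ to $\Omega$, I would extend $\Z$ as the solution of the covariant wave equation $\Box_\g\Z=0$ (satisfied by every Killing field of a vacuum metric, since $\Box_\g\Z_\mu=-R_{\mu\nu}\Z^\nu=0$) with data on $\HH$ the field built above, and then show the extension is genuinely Killing, i.e. ${}^{(\Z)}\pi\equiv0$ on $\Omega$. The point is that the pair $\big({}^{(\Z)}\pi,\,\Lie_\Z R\big)$ solves a closed system of covariant wave and transport equations — a consequence of $\Box_\g\Z=0$ and the vacuum Bianchi identities — with trivial Cauchy data on the bifurcate hypersurface $\HH$. A unique continuation theorem for this system then forces ${}^{(\Z)}\pi\equiv0$, and similarly $[\T,\Z]=0$ (its deformation tensor vanishes and it vanishes on $\HH$); since closed $2\pi$-periodic orbits of $\Z$ on $\SS_0$ persist nearby, $\Z$ is the desired rotational Killing field.

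The hard part is the unique continuation step. Hörmander-type unique continuation fails across a single null hypersurface, so one cannot propagate off $\HH^+$ alone; what saves the argument is the combination of the bifurcate structure with stationarity. Concretely I would introduce optical functions $u,\underline u$ near $\SS_0$ with $\{u=0\}=\HH^+$ and $\{\underline u=0\}=\HH^-$, use a Carleman weight built from $u\underline u$ whose level sets are pseudoconvex in a full neighborhood of $\SS_0$, and exploit that all the relevant geometric quantities commute in the appropriate sense with $\T$, so that $\T$-derivatives behave as lower-order terms and the coupled wave-transport system is $\T$-conditionally pseudoconvex. Establishing the Carleman estimate for this system, and verifying the pseudoconvexity computation at $\SS_0$, is the technical core; everything else is bookkeeping with the null structure equations.
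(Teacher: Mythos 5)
This statement is not proved in the paper at all: Theorem~\ref{old.theorem} is quoted verbatim from \cite{AlIoKl} and used as a black box (it is the first of the three ingredients listed in the introduction), so there is no internal proof to compare yours against. Judged against the actual argument of \cite{AlIoKl}, your sketch has the correct two--stage architecture: build a candidate field on the bifurcate horizon $\HH^+\cup\HH^-$, extend it off the horizon as a solution of a wave--type equation, and show that the pair consisting of its deformation tensor and the Lie derivative of the curvature satisfies a closed wave--transport system with vanishing data on $\HH^+\cup\HH^-$, which is then killed by Carleman estimates near $\SS_0$ (these are the estimates of \cite{IoKl2}).

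Two points in your sketch are off. First, the object constructed on the horizon in \cite{AlIoKl} is not the rotation $\Z$ but the Hawking vector field $\K$, which \emph{vanishes} on $\SS_0$ and is tangent to the null generators of $\HH^\pm$; this is precisely what makes the horizon construction and its propagation canonical. The rotation is recovered only afterwards, as a linear combination of $\T$ and $\K$, using $\K=0$ on $\SS_0$, $[\T,\K]=0$, and the fact that the closure of the flow of $\T|_{\SS_0}$ inside the compact isometry group of $(\SS_0,\gamma)$ is a circle. Your plan to transport $\Z$ itself along the two null legs while forcing $\Lie_\Z\g=0$ on $\HH$ is the over-determined step the choice of $\K$ is designed to avoid ($\Z$ is not tangent to the generators), and you give no argument for its consistency. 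Second, your appeal to $\T$-conditional pseudoconvexity is a mis-attribution: the local unique continuation across the bifurcate horizon is \emph{unconditional} --- the weight built from $u\,\underline{u}$ is pseudoconvex in a sufficiently small neighborhood of $\SS_0$ purely because of the bifurcate geometry, and indeed $\K$ is constructed in \cite{AlIoKl} without assuming the existence of any stationary Killing field. $\T$-conditional pseudoconvexity is the tool for the \emph{global} extension of $\Z$ (ingredient (2) of the introduction, \cite{IoKl}, \cite{AlIoKl2}), which is exactly the step the present paper replaces by the Sudarsky--Wald argument.
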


In step  2 we use ideas inspired   from Sudarsky--Wald \cite{Su-Wa} to prove that that $\T$ becomes strictly 
timelike within the set $\Omega\cap\mathbf{E}$ and stays timelike throughout the complement of $\Omega$ in $\mathbf{E}$, provided that the constant $\epsilon$ in \eqref{Tsmall} is sufficiently small. Therefore, see \cite{Mu}, the space-time is analytic in a neighborhood of $\Sigma_1$, outside the domain $\Omega$. Therefore, using \cite{No}, the rotational  Killing vector-field $\Z$ can be extended throughout the exterior region. Alternatively, we can avoid passing through  real analyticity  by 
observing that  the  $\T$- conditional pseudo-convexity is automatically satisfied   if $\T$ is time-like.  We can therefore rely, as well,  on the extension results  proved 
in      \cite{IoKl},\cite{IoKl2},       \cite{AlIoKl2}.

Finally, we can appeal to the Carter-Robinson theorem to conclude that $(\M,\g)$ is isometric to a Kerr solution, 
with a small angular momentum $a=J/M$.

Therefore, the main goal of this paper is to show the following:

\begin{proposition}\label{propo}
Under the above assumptions  
 there exists $\epsilon>0$ sufficiently small such that 
if \eqref{Tsmall} is satisfied then the ergoregion $\{p\in\Sigma_1:\g(\T,\T)\geq 0\}$ of $\T$ is contained in the domain $\Omega$ of
extension of the rotational Killing field $\Z$ guaranteed by Theorem \ref{old.theorem}. 
\end{proposition}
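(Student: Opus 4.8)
The plan is to work entirely on the maximal hypersurface $\Sigma_1$ and to use that, since $\T$ is Killing and $\Sigma_1$ is maximal in a vacuum spacetime, the lapse and shift of $\T$ relative to $\Sigma_1$ solve good elliptic equations. Write $\T=N\,n+X$ along $\Sigma_1$, with $n$ the future unit normal, $N\ge 0$ the lapse and $X$ the (spatial) shift, so that $\g(\T,\T)=-N^2+|X|_h^2$ and the ergoregion $\{p\in\Sigma_1:\g(\T,\T)\ge0\}$ is exactly $\{q\in\Sigma_1:N(q)\le|X(q)|_h\}$. Because $\T$ is tangent to $\HH^+$ and $\HH^-$ it is tangent to $\SS_0$, hence to $\Sigma_1$ along $\SS_0$; therefore $N$ vanishes on $\SS_0$ and $X|_{\SS_0}=\T|_{\SS_0}$, so \eqref{Tsmall} gives $\|\,|X|_h\,\|_{L^\infty(\SS_0)}^2=\|\g(\T,\T)\|_{L^\infty(\SS_0)}<\eps^2$. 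Moreover, since the horizon Killing field $\T+\Omega_H\Z$ vanishes on $\SS_0$, one has $X|_{\SS_0}=-\Omega_H\Z|_{\SS_0}$ with $|\Omega_H|\lesssim\eps$, which is the infinitesimal form of the \emph{small angular momentum on the horizon} hypothesis.

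First I would record the structural equations. Maximality and the vacuum constraints give $\operatorname{tr}_h k=0$, the momentum constraint $D^jk_{ij}=0$ and the Hamiltonian constraint $R_h=|k|_h^2$, while $\T$ being Killing forces $D_{(i}X_{j)}=Nk_{ij}$ (in particular $\operatorname{div}_hX=0$) together with the lapse equation $\Delta_hN=N|k|_h^2$; from the latter $N\le1$ by the maximum principle and $N>0$ on $\Sigma_1$, since by {\bf GR} the field $\T$ is transverse to $\Sigma_1=\Sigma^0\cap\E$ and compatibly oriented with $n$ near infinity. Then I would establish the Sudarsky--Wald identity: since $N|k|_h^2=k^{ij}D_iX_j$, integrating by parts and using the momentum constraint together with the decay $X=O(r^{-2})$, $k=O(r^{-3})$ at spatial infinity yields $\int_{\Sigma_1}N|k|_h^2\,dv_h=-\oint_{\SS_0}k(X,s)\,dA$, where $s$ is the outward unit normal of $\SS_0$ in $\Sigma_1$. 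Since $X|_{\SS_0}$ has size $\lesssim\eps$ while $k$ and $\SS_0$ are fixed, this gives the quantitative smallness of the total curvature, $\int_{\Sigma_1}N|k|_h^2\,dv_h\lesssim\eps$.

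Next I would use this to control the lapse and the shift. Let $h_0$ solve $\Delta_hh_0=0$ on $\Sigma_1$ with $h_0|_{\SS_0}=0$ and $h_0\to1$ at infinity; $h_0$ is a fixed positive function with $\partial_sh_0>0$ on $\SS_0$. Because $N\le1$ and $|k|_h$ is bounded on $\Sigma_1$, the $L^1$ bound on $N|k|_h^2$ interpolates to $\|\,N|k|_h^2\,\|_{L^2(\Sigma_1)}\lesssim\eps^{1/2}$, and then $\Delta_h(h_0-N)=-N|k|_h^2$ with zero boundary data gives $\|h_0-N\|_{L^\infty(\Sigma_1)}\lesssim\eps^{1/2}$; combining this with the Hopf lemma applied to $\Delta_hN-|k|_h^2N=0$ at $\SS_0$ (where $N$ attains its minimum $0$ with $N>0$ inside) produces a uniform $c_2>0$ with $N(q)\ge c_2\min(\operatorname{dist}(q,\SS_0),1)$ for all $q$, once $\eps$ is small. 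For the shift I would differentiate $D_{(i}X_{j)}=Nk_{ij}$ and invoke the momentum constraint to get the elliptic system $\Delta_hX_i+(\operatorname{Ric}_h)_{ij}X^j=2k_{ij}D^jN$ on $\Sigma_1$, with $X|_{\SS_0}$ of size $\lesssim\eps$ and $X\to0$ at infinity. Splitting $\Sigma_1$ into a small metric ball $B_\rho(\SS_0)$, where $\int_{B_\rho}|k|_h^2\lesssim\rho$ by volume comparison, and its complement, where $N\gtrsim1$ by the previous step and hence $\int|k|_h^2\lesssim\int N|k|_h^2\lesssim\eps$, the source obeys $\|2k\cdot DN\|_{L^2(\Sigma_1)}\lesssim\rho^{1/2}+C_\rho\eps^{1/2}$; with the small boundary data, a global elliptic estimate for $\Delta_h+\operatorname{Ric}_h$ would then give $\|X\|_{L^\infty(\Sigma_1)}\le\eta$ with $\eta=\eta(\rho,\eps)$ as small as desired after choosing first $\rho$, then $\eps$, small.

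The conclusion is then immediate: if $q$ lies in the ergoregion, $c_2\min(\operatorname{dist}(q,\SS_0),1)\le N(q)\le|X(q)|_h\le\eta$, so $\operatorname{dist}(q,\SS_0)\le\eta/c_2$; since $\Omega\supseteq\{\operatorname{dist}(\cdot,\SS_0)<\rho_0\}$ for a fixed $\rho_0>0$, choosing $\rho,\eps$ small enough that $\eta/c_2<\rho_0$ forces the whole ergoregion into $\Omega$. The hard part will be the global control of the shift in the last step: the operator $\Delta_h+\operatorname{Ric}_h$ on $1$-forms need not be coercive, because by the Gauss equation $\operatorname{Ric}_h=-k\circ k+E$ with $E$ the electric part of the Weyl curvature, which is of order $r^{-3}$ but \emph{not} small, so the curvature term in the energy identity for $X$ cannot simply be absorbed into $\int|DX|^2$. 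One must therefore either prove invertibility of $\Delta_h+\operatorname{Ric}_h$ in appropriate weighted spaces on the asymptotically flat end with the boundary condition on $\SS_0$ (e.g.\ by first showing that an $h$-Killing field vanishing on $\SS_0$ and decaying at infinity is trivial, then using Fredholm theory), or replace this step by a direct maximum-principle argument for $|X|_h^2$ or for $\g(\T,\T)$, possibly exploiting that the twist of $\T$ is closed in vacuum and small on $\SS_0$. The remaining ingredients — the constraint manipulations, the Sudarsky--Wald identity, and the Hopf lemma — are comparatively routine.
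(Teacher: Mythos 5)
Your first two steps---the decomposition $\T=N\,n+X$, the Sudarsky--Wald identity $\int_{\Sigma_1}N|k|_h^2\,dv_h\lesssim\eps$ obtained by integrating $N|k|^2=\nabla^i(X^jk_{ij})$ against the smallness of $X$ on $\SS_0$, and the Hopf-lemma lower bound $N\gtrsim\mathrm{dist}(\cdot,\SS_0)$---coincide with the paper's Lemmas \ref{LemmaP1} and the first half of Lemma \ref{LemmaP2}. The gap, which you correctly flag yourself, is the control of the shift: the elliptic system $\Delta_hX_i+(\operatorname{Ric}_h)_{ij}X^j=2k_{ij}D^jN$ is not coercive and you have no invertibility statement for it in the relevant weighted spaces, so the step ``global elliptic estimate $\Rightarrow\|X\|_{L^\infty}\le\eta$'' is not justified. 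As written the argument does not close.

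The missing idea is that you never need global elliptic theory for $X$. First upgrade the integral bound to a \emph{pointwise} bound on $k$: if $|k_{ij}(p)|\geq\eps^{1/8}$ at some $p$, then by the a priori smoothness (controlled by $\overline{A}$) the same holds up to a factor $1/2$ on a ball of radius $\sim\eps^{1/8}$ around $p$, on which $\int N\,d\mu\gtrsim\eps^{4/8}$ by the Hopf bound; this forces $\int N|k|^2\gtrsim\eps^{6/8}$, contradicting the integral estimate. An interpolation with the a priori bounds then also makes $\nabla k$ pointwise small. Once $\pi_{ij}:=\nabla_iX_j+\nabla_jX_i=2Nk_{ij}$ and $\nabla\pi$ are pointwise small, $X$ is an \emph{approximate} Killing field, and the classical identity $\nabla_a\nabla_bX_c=X^dR_{dabc}+\tfrac12(\nabla_a\pi_{bc}+\nabla_b\pi_{ac}-\nabla_c\pi_{ab})$ turns $(X,\nabla X)$ into a solution of a linear transport system along radial curves from $\SS_0$ with small source and small data on $\SS_0$ (the tangential derivatives of $X$ on $\SS_0$ are small by interpolation from $|X|\lesssim\eps$ there, and the symmetrized normal components from $\pi=0$ on $\SS_0$). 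A Gronwall estimate over the compact region $\Sigma_1\setminus\Sigma_{R_1}$ then gives $|X|+|\nabla X|\lesssim\eps^{1/30}$ there, and the conclusion follows exactly as in your last paragraph. This replaces your problematic elliptic step by an ODE argument that only uses the pointwise smallness of $k$, which is where the Hopf lemma really earns its keep.
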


\subsection{The precise quantitative  assumptions}\label{precisesmall}  The assumptions of the previous subsection   were  qualitative in nature.
In this subsection  we make them precise in  a quantitative  way to make sense of our  smallness assumption \eqref{Tsmall}.

 Let $\pr_1,\pr_2,\pr_3$ denote the vectors tangent to $\Sigma^0$, induced by the diffeomorphism $\Phi_0$. Let $\Sigma_r=\Phi_0(E_r)$, where, as before, $E_r=\{x\in\mathbb{R}^3:|x|>r\}$. In particular, for  our original spacelike hypersurface, we have $\Si^0=\Si_{1/2}$.

As in \cite[Section 2.1]{AlIoKl2}, using \eqref{As-Flat} and the assumption that $\Sigma^0$ is spacelike, it follows that there are large constants $A_1$ and $R_1\ge R_0$, such that   $R_1\geq A_1^4$,  with the following properties: on $\Sigma_{3/4}$,
  for any $X=(X^1,X^2,X^3)$,
\begin{equation}\label{prel1}
A_1^{-1}|X|^2\leq\sum_{\al,\be=1}^3X^\al X^\be\g_{\al\be}\leq
A_1|X|^2\quad\text{ and
}\quad\sum_{\al=1}^{3}|\g(\pr_\al,\T)|+|\g(T_0,\T)|\leq A_1.
\end{equation}
In $\Phi_0(\mathbb{R}\times E_{R_1})$, which we continue to denote
by $\M^{(end)}$, $\T=\pr_0$ and  (see notation in footnote
\ref{foot:asympt}),
\begin{equation}\label{prel2}
\begin{split}
\sum_{m=0}^6 r^{m+1}\sum_{j,k=1}^3
|\partial^m(\g_{jk}-\delta_{jk})|&+ \sum_{m=0}^6 r^{m+2}
|\partial^m(\g_{00}+1-2M/r  )|\\
&+\sum_{m=0}^6 r^{m+3} \sum_{i=1}^3
|\partial^m(\g_{0i}+2\ep_{ijk}S^jx^kr^{-3})|\le A_1.
\end{split}
\end{equation}
We construct  a system of coordinates in a small neighborhood $\widetilde{\M}$ of $\Sigma^0\cap\overline{\E}$, which extends both  the
coordinate system of $\M^{(end)}$ in \eqref{prel2} and that of $\Si^0$. We do that with the help of  a smooth vector-field $T'$ which interpolates between $\T$ and $T_0$.
More precisely we  construct  $T'$ in a neighborhood of $\Sigma_{3/4}$ such that $T'=\T$ in $\Phi_0(\mathbb{R}\times E_{2R_1})$ and $T'=\eta(r/R_1)T_0+(1-\eta(r/R_1))\T$ on $\Sigma_{3/4}$, where $\eta:\mathbb{R}\to[0,1]$ is a smooth function supported in $(-\infty,2]$ and equal  to $1$ in $(-\infty,1]$.
  Using now  the flow induced by  $T'$ we extend the original  diffeomorphism $\Phi_0:E_{1/2}\to \Sigma^0$,  to cover a full neighborhood of $\Sigma_1$. Thus  there exists $\varep_0>0$ sufficiently small and a diffeomorphism $\Phi_1:(-\varep_0,\varep_0)\times E_{1-\varep_0}\to\widetilde{\M}$, which agrees with $\Phi_0$ on $\{0\}\times E_{1-\varep_0}\cup (-\varep_0,\varep_0)\times E_{2R_1}$ and  such that $\partial_0=\pr_{x^0}=T'$. 
  
By construction, using also \eqref{prel2} and letting $\varep_0$ sufficiently small depending on $R_1$,
\begin{equation}\label{prel3}
\sum_{j=1}^3|\g_{0j}|+|\g_{00}+1|\leq A_1/(R_1+r)\quad \text{ in }\widetilde{\M}.
\end{equation}
Note, in particular,  that the Killing field $T$ is  time-like on $\Sigma_{R_1}$.

With $\g_{\al\be}=\g(\pr_\al,\pr_\be)$ and $\T=\T^\al\pr_\al$, let
\begin{equation}\label{prel4}
A_2=\sup_{p\in\widetilde{\M}}\sum_{m=0}^6\Big[\sum_{\al,\be=0}^3|\pr^m\g_{\al\be}(p)|+\sum_{\al=0}^3|\pr^m\T^\al(p)|\Big].
\end{equation}
Finally, we fix
\begin{equation}\label{mainconst}
\overline{A}=\max(R_1,A_2,\varep_0^{-1},(M^2-J)^{-1}).
\end{equation}
The constant $\overline{A}$ is our  main effective constant. Notice that this constant also controls the components of the  
contra-variant metric $\g^{\al\be}$ (and their derivatives), as a consequence of \eqref{prel1} and \eqref{prel4}. The constant also
controls the components of the second fundamental form $k$ (see \eqref{df1}) along $\Sigma^0$, and their derivatives
\footnote{We remark that the constant $\overline{A}$ depends only on $M+1/M+1/(M-a)$ in the case when $\E$ is isometric to the 
domain of outer communications of the Kerr space-time $\mathcal{K}(M,a)$. It does not increase when $a$ approaches $0$, 
if $M$ is fixed.}. 

The constant $\epsilon$ in \eqref{Tsmall} will be assumed sufficiently small, depending only on $\overline{A}$. Throughout  the remaining paper we use   the     notation $A\les B$   to denote unequalities    $A\le  C B$,  with  a universal  constant  $  C>0$   which 
 depends only on    $  \overline{A}$. Similarly      $A \gtrsim B$   means   $A\ge C B$.

To summarize, we have  defined a neighborhood $\widetilde{\M}$ of $\Sigma^0\cap\overline{\E}$ and a diffeomorphism $\Phi_1:(-\varep_0,\varep_0)\times E_{1-\varep_0}\to\widetilde{\M}$, $\varep_0>0$, such that the bounds \eqref{prel1}, \eqref{prel2}, \eqref{prel3}, \eqref{prel4} hold (in coordinates induced by the diffeomorphism $\Phi_1$).

\section{Proof of the main theorem}\label{proof}

\subsection{Some useful identities.} In this subsection we gather various formulas relating the Killing vector-field $\T$, the metric $\g$, and 
the maximal hypersurface $\Sigma_1$. Let $h:=\g|_{\Sigma^0}$ denote the induced metric on the hypersurface $\Sigma^0$ and let $\nabla$ denote the induced Levi-Civita connection. 
Also let $T_0$ denote the future unit normal vector-field to $\Sigma^0$. Let $k_{ij}$ denote the second fundamental 
form of the hypersurface $\Sigma^0$,
\begin{equation}\label{df1}
  k(Y, Z):=-\g(\D_Y T_0, Z),
\end{equation}
for all vector-fields $Y, Z$ tangent to $\Si^0$. Notice that
\begin{equation}\label{df2}
\nabla_YZ=\D_YZ +k(Y,Z)T_0, 
\end{equation}
for all vector-fields $Y, Z$ tangent to $\Si^0$. We also recall the Gauss equation,
\begin{equation}\label{df3}
\nabla_ik_{jm}-\nabla_jk_{im}=-\mathbf{R}_{ijm\alpha}(T_0)^\alpha.
\end{equation}

In our case, since $\Sigma_1$ is a maximal hypersurface we have, by definition,
\begin{equation}\label{df4}
h^{ij}k_{ij}=0.
\end{equation}
Using also the Gauss equation and the Einstein vacuum equations it follows that
\begin{equation}\label{df5}
\nabla^ik_{ij}=0.
\end{equation}

We now turn to the a natural decomposition of the Killing vector $\T$ relative to our hypersurface, 
\begin{equation}\label{df6} 
\T=nT_0+X,
\end{equation}
where $n=-\g(\T,T_0)$ (the {\it {lapse function}}) is a smooth real-valued function on $\Sigma^0$, and $X\in T\Sigma^0$ (the {\it{shift vector}}) is a smooth vector-field along $\Sigma^0$ that satisfies $\g(X,T_0)=0$. Since $\T$ is Killing, it follows easily from \eqref{df2} and the decomposition \eqref{df6} that
\begin{equation}\label{df7}
\nabla_iX_j+\nabla_jX_i=2nk_{ij}.
\end{equation}
Finally, the Killing equation together with the maximality imply that $k_{ij}, n$ satisfy the {\it{lapse equation}}
\begin{equation}\label{Lapllapse}
\Delta n=|k|^2n\qquad\text{ along }\Sigma_1,
\end{equation}
where $\Delta:=\nabla^i\nabla_i$ is the Laplace--Beltrami operator induced on the surface $\Sigma^0$.

\subsection{Proof of Proposition \ref{propo}} We show first how to control the lapse function $n$ along the surface $\Sigma_1$. 

\begin{lemma}\label{LemmaP1}
The function $n$ satisfies
\begin{equation}\label{pro0}
n(p)\in[0,1)\qquad\text{ for any }p\in\Sigma_1.
\end{equation}
Moreover, there is a constant $C_1$ that depends only on the constant $\overline{A}$ in \eqref{mainconst} such that
\begin{equation}\label{pro1}
n(p)\geq C_1^{-1}\frac{r(p)-1}{r(p)}\qquad \text{ for any }p\in\Sigma_1.
\end{equation}
\end{lemma}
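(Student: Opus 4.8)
The plan is to establish \eqref{pro0} and \eqref{pro1} using the lapse equation \eqref{Lapllapse} together with the asymptotic behavior of $n$ at spatial infinity and its behavior on the bifurcation sphere $\SS_0=\partial\Sigma_1$. First I would record the two boundary facts. On $\SS_0$ the Killing field $\T$ is null (it is tangent to the null hypersurfaces $\HH^\pm$ and these are non-expanding, so $\T$ is a null generator there); hence $\g(\T,\T)=0$ on $\SS_0$, and since $\T=nT_0+X$ with $T_0$ timelike unit and $X$ spacelike, this gives $n^2=\g(X,X)\ge 0$ on $\SS_0$, i.e.\ $n\ge 0$ there, and in fact $n\le \|\g(\T,\T)\|$-controlled bounds will not even be needed for the lower bound sign. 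At spatial infinity, from \eqref{As-Flat}/\eqref{prel2} one has $\T=\partial_0$ with $\g_{00}\to-1$, so $n=-\g(\T,T_0)\to 1$ as $r\to\infty$; more precisely $n=1-M/r+O(r^{-2})$, so $0<n<1$ for large $r$ and $n\to1$.

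Next I would run the maximum principle on the lapse equation $\Delta n=|k|^2 n$ on $\Sigma_1$. Since $|k|^2\ge0$, the operator $\Delta-|k|^2$ satisfies the hypotheses of the weak maximum principle / Hopf lemma on the manifold-with-boundary $\Sigma_1$ (noncompact, but with controlled asymptotics). For the upper bound $n<1$: the function $w=1-n$ satisfies $\Delta w=|k|^2(w-1)$, hence $\Delta w-|k|^2 w=-|k|^2\le 0$, so $w$ is a supersolution of $(\Delta-|k|^2)w\le 0$ — equivalently I would write it as $\Delta(n-1)=|k|^2 n = |k|^2(n-1)+|k|^2 \ge |k|^2(n-1)$, so $u:=n-1$ satisfies $(\Delta-|k|^2)u\ge 0$, i.e.\ $u$ is a subsolution; $u\to 0$ at infinity and on $\SS_0$ we have $u=n-1\le 0$ (using $n\le 1$ on $\SS_0$, which follows from $n^2=\g(X,X)$ and crude bounds, or more simply that $\T$ being causal and non-vanishing forces $n$ bounded by the geometry near $\SS_0$; in any case $n$ on $\SS_0$ is small by \eqref{Tsmall}). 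The maximum principle then yields $u\le 0$ throughout, i.e.\ $n\le 1$, and the strong maximum principle upgrades this to $n<1$ in the interior unless $n\equiv 1$, which is incompatible with $n$ small on $\SS_0$. For the lower bound, $n$ itself satisfies $(\Delta-|k|^2)n=0$; since $n\to1>0$ at infinity and $n\ge0$ on $\SS_0$, the minimum principle gives $n\ge 0$ on $\Sigma_1$, and the strong minimum principle / Hopf lemma gives $n>0$ in the interior (again unless $n\equiv 0$, excluded by the asymptotics). This proves \eqref{pro0}.

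For the quantitative lower bound \eqref{pro1} I would construct an explicit subsolution (barrier) of the form $\phi=c\,(r-1)/r = c(1-1/r)$ — or a smoothed, geometrically meaningful version of it built from the coordinate $r$ on $\Sigma_1$ coming from $\Phi_1$ — and compare. The point is that $\phi$ vanishes on $\SS_0$ (where $r=1$) and tends to $c$ at infinity, matching the qualitative profile of $n$. Using the controlled geometry (bounds \eqref{prel1}–\eqref{prel4} on $h$, its derivatives, and $k$, all in terms of $\overline{A}$), one computes $\Delta\phi$ and $|k|^2\phi$ and checks that for a suitable small constant $c=C_1^{-1}$ depending only on $\overline{A}$ one has $(\Delta-|k|^2)\phi\ge 0$ in the region where $r$ is moderate, while in the asymptotic region $r\ge R_1$ the explicit expansion $n=1-M/r+O(r^{-2})$ directly gives $n\gtrsim 1 \gtrsim (r-1)/r$. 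Then $n-\phi$ satisfies $(\Delta-|k|^2)(n-\phi)\le 0$, is $\ge 0$ on $\SS_0$ and $\ge 0$ for $r\ge R_1$, so by the maximum principle $n\ge\phi$ everywhere, which is \eqref{pro1}. The main obstacle I anticipate is exactly this barrier construction: one must verify that the $L^\infty$ lower bound on $n$ degenerates \emph{linearly} (and no worse) as one approaches $\SS_0$, which requires both the Hopf lemma (to get the correct linear rate of vanishing of $n$ at the boundary, with a rate controlled by $\overline{A}$) and care that $|k|^2$, though possibly large, does not destroy the subsolution property of $\phi$ — here the sign $|k|^2\ge0$ works in our favor since it only helps make $n$ a solution of an equation whose zeroth-order term has the good sign, but one must still control $\Delta\phi$ against the geometry near $\SS_0$.
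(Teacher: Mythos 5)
Your overall strategy --- weak maximum principle for \eqref{pro0}, Hopf-type barrier for \eqref{pro1} --- is the same as the paper's, which applies the weak maximum principle on $\Sigma_1\setminus\overline{\Sigma_{R_1}}$ and then cites the proof of the strong maximum principle (Hopf lemma) for the quantitative lower bound. However, two of your steps are genuinely flawed. First, the boundary condition on $\SS_0$: $\T$ is \emph{not} a null generator there. Since $\T$ is tangent to both $\HH^+$ and $\HH^-$, which intersect transversally in $\SS_0$, it is tangent to the spacelike $2$-sphere $\SS_0\subset\Sigma^0$; hence $\T$ is orthogonal to the unit normal $T_0$ and $n=-\g(\T,T_0)\equiv 0$ on $\SS_0$ \emph{exactly}. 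This is the fact the paper uses, and it is what makes the Dirichlet data for the Hopf lemma clean. Your chain ``$\g(\T,\T)=0$ on $\SS_0$, so $n^2=\g(X,X)\ge 0$, so $n\ge 0$'' is a non sequitur: $n^2\ge 0$ carries no sign information about $n$, and in any case $\g(\T,\T)=-n^2+h(X,X)$ is only assumed small on $\SS_0$ by \eqref{Tsmall}, not zero. The correct tangency argument gives you everything you need at the boundary ($n=0\le 1$ and $n=0\ge 0$) without any of this.

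Second, the global barrier $\phi=c(1-1/r)$ is not a subsolution of $\Delta-|k|^2$: since $h_{ij}=\delta_{ij}+O(r^{-1})$ one only gets $\Delta\phi=O(c\,r^{-3})$ with no favorable sign, while $|k|^2\phi\ge 0$ can be of size comparable to $c$ near $\SS_0$ --- at this stage no smallness of $k$ is available (that is Lemma \ref{LemmaP2}, which is proved \emph{after} and \emph{using} the present lemma; only the $\overline{A}$-bound on $k$ may be used here). So $(\Delta-|k|^2)\phi\ge 0$ fails and the comparison does not close as written. The standard repair, which is what the paper's citation of ``the proof of the Hopf lemma'' amounts to, is: (i) obtain $n\ge c_0(\overline{A})>0$ on the compact set $\{r\ge 1+\delta\}\cap(\Sigma_1\setminus\Sigma_{R_1})$ by a quantitative strong minimum principle or Harnack inequality (legitimate since all coefficients are controlled by $\overline{A}$), together with the expansion $n=1-M/r+O(r^{-2})$ for $r\ge R_1$; (ii) in the thin collar $1<r<1+\delta$ use the exponential barrier $v=e^{\alpha(r-1)}-1$, for which $(\Delta-|k|^2)v\ge \big(\alpha^2|\nabla r|^2-C\alpha-C\big)e^{\alpha(r-1)}>0$ once $\alpha=\alpha(\overline{A})$ is large, and which vanishes linearly at $r=1$; comparing $n$ with a small multiple of $v$ gives $n\gtrsim r-1$ in the collar. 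Combining (i) and (ii) yields \eqref{pro1}. With these two corrections your argument coincides with the paper's.
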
 

\begin{proof}[Proof of Lemma \ref{LemmaP1}] The identity $\T=nT_0+\sum_{j=1}^3X^j\partial_j$ together with the asymptotic flatness assumption \eqref{As-Flat} show that
\begin{equation}\label{pro2}
X^1,X^2,X^3=O_6(r^{-2})\qquad\text{ and }\qquad n=1-M/r+O_6(r^{-2})\qquad\text{ in }\Sigma_{R_1}.
\end{equation}
Moreover, $n\equiv 0$ on $\mathcal{S}_0$, since $\T$ is tangent to $\mathcal{S}_0$. Recall also that $n$ satisfies the elliptic equation $\Delta n=|k|^2n$ along $\Sigma_1$, see \eqref{Lapllapse}. 

The bound \eqref{pro0} follows as a consequence of the weak maximum principle applied to the function $n$ in the domain $\Sigma_1\setminus\overline{\Sigma_{R_1}}$. The bound \eqref{pro1} follows from the proof of the strong maximum principle (Hopf lemma), see, for example,\cite[Chapters 3.1, 3.2]{GiTu}.
\end{proof}

We use now our main assumption \eqref{Tsmall} to show that $k$ is small along $\Sigma_1$.

\begin{lemma}\label{LemmaP2}
For any $i,j\in\{1,2,3\}$ we have
\begin{equation}\label{pro20}
\|k_{ij}\|_{L^\infty(\Sigma_1)}\leq \epsilon^{1/8}.
\end{equation}
\end{lemma}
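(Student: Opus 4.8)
The plan is to combine a global integral identity for $k$ with local Sobolev estimates, handling the region near the bifurcation sphere $\SS_0$ separately. The starting point is the weighted bound $\int_{\Sigma_1}n\,|k|^2\lesssim\epsilon$. To obtain it, contract the deformation identity \eqref{df7} with $k^{ij}$: since $k$ is symmetric and $\nabla^ik_{ij}=0$ by \eqref{df5}, this gives the pointwise identity $\nabla^i\bigl(X^jk_{ij}\bigr)=n|k|^2$. Integrating over $\Sigma_1$, the flux at spatial infinity vanishes by the decay of $X$ and $k$ from \eqref{As-Flat}--\eqref{prel2}, while the flux on $\SS_0$ equals $\int_{\SS_0}X^jk_{ij}\nu^i$; since $n\equiv 0$ on $\SS_0$ (as $\T$ is tangent to $\SS_0$) we have $\g(\T,\T)=|X|^2$ there, so $\|X\|_{L^\infty(\SS_0)}<\epsilon$ by \eqref{Tsmall}, and, since $\|k\|_{C^m(\overline{\Sigma_1})}\lesssim 1$ for bounded $m$ by the choice of $\overline{A}$, this flux is $\lesssim\epsilon$. (In particular this already forces the ADM angular momentum of \eqref{As-Flat2} to be $\lesssim\epsilon^{1/2}$.)

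Away from $\SS_0$ this yields the bound directly. Fix a small $\overline{A}$-constant $\delta_0>0$. By the Hopf bound \eqref{pro1}, $n\gtrsim 1$ on $\{r\ge 1+\delta_0\}$, hence $\|k\|_{L^2(\{r\ge 1+\delta_0\})}\lesssim\epsilon^{1/2}$. Covering this region by balls of bounded size (comparing coordinate and metric balls via \eqref{prel1}, and rescaling dyadically in the asymptotic end) and using the Gagliardo--Nirenberg inequality together with $\|k\|_{C^m}\lesssim1$, one gets $\|k\|_{L^\infty(\{r\ge 1+\delta_0\})}\lesssim\epsilon^{c}$ for some $c>1/8$, hence $<\epsilon^{1/8}$ for $\epsilon$ small.

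On the collar $\{1\le r\le 1+\delta_0\}$, where $n$ degenerates, the crucial input --- and the main obstacle --- is the sub-claim $\|k\|_{L^\infty(\SS_0)}\lesssim\epsilon$ (any bound $\lesssim\epsilon^{c_0}$ with $c_0>1/8$ suffices). Granted it, the collar bound follows by a short argument combining it, through the Lipschitz estimate $\|k\|_{C^1}\lesssim1$, with the global bound $\int n|k|^2\lesssim\epsilon$ applied on balls around collar points, on which $n\gtrsim(r-1)$ by \eqref{pro1}: at a collar point $p$ with $\delta=|k(p)|$ and $s=r(p)-1$ one has both $\delta\lesssim\epsilon^{c_0}+s$ and $s\,\delta^2\min(s,\delta)^3\lesssim\epsilon$, and these are incompatible with $\delta>\epsilon^{1/8}$ once $\epsilon$ is small.

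It remains to sketch the sub-claim on $\SS_0$. The non-expanding condition on $\HH^\pm$ --- which in vacuum, via Raychaudhuri, forces the full null second fundamental forms of $\HH^\pm$ to vanish there --- forces the restriction of $k$ to $T\SS_0$ to vanish on $\SS_0$, and then maximality \eqref{df4} forces $k(\nu,\nu)=0$, where $\nu$ is the unit normal to $\SS_0$ in $\Sigma^0$; so $k|_{\SS_0}$ reduces to the one-form $\beta:=k(\nu,\cdot)$. Since $\T$ is tangent to $\SS_0$, is small there, and satisfies $\nabla_{(i}X_{j)}=0$ on $\SS_0$ by \eqref{df7}, $\T|_{\SS_0}$ is a Killing field of the (controlled) $2$-metric $h|_{\SS_0}$; hence $\|\T\|_{C^m(\SS_0)}\lesssim\epsilon$ for bounded $m$, and the horizon angular velocity is $\lesssim\epsilon$. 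The Codazzi equations \eqref{df3}, restricted to $\SS_0$, then express $d\beta$ and $\mathrm{div}_{\SS_0}\beta$ through curvature components which are $\lesssim\epsilon$ by the structure of the stationary non-expanding horizon and the smallness of the angular velocity, and a Hodge decomposition of $\beta$ on the sphere $\SS_0$ gives $\|\beta\|_{L^\infty(\SS_0)}\lesssim\epsilon$. I expect this last step to be the real work: \eqref{df7} and the standard Killing identities are tautologically consistent with any value of $k|_{\SS_0}$, so one must genuinely use the vacuum constraint equations on $\SS_0$ (alternatively, transfer the global bound $\int n|k|^2\lesssim\epsilon$ up to $\SS_0$ by a weighted estimate).
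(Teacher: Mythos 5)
Your derivation of the integral identity $n|k|^2=\nabla^i(X^jk_{ij})$ and of the bound $\int_{\Sigma_1}n|k|^2\,d\mu\lesssim\epsilon$ is exactly the paper's first step, and your treatment of the region $\{r\ge 1+\delta_0\}$ (where $n\gtrsim 1$, so $\|k\|_{L^2}\lesssim\epsilon^{1/2}$ and interpolation with $\|k\|_{C^1}\lesssim 1$ gives $\|k\|_{L^\infty}\lesssim\epsilon^{1/5}$) is correct. The problem is the collar. You reduce it to the sub-claim $\|k\|_{L^\infty(\SS_0)}\lesssim\epsilon^{c_0}$ with $c_0>1/8$, and your argument for that sub-claim is only a sketch: the decisive step --- that the Codazzi/constraint equations on $\SS_0$ force the torsion one-form $\beta=k(\nu,\cdot)$ to be small because the ``relevant curvature components'' are small --- is asserted, not proved, and you yourself flag it as ``the real work.'' As it stands this is a genuine gap: nothing you have written controls $\beta$ on $\SS_0$, and \eqref{Tsmall} together with the Killing identities is, as you note, consistent with any value of $k|_{\SS_0}$ a priori.

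The gap is also an unnecessary detour. Your collar difficulty comes from choosing the ball radius $\rho\le\min(s,\delta)$ so as to have $n\gtrsim s$ on all of it, which loses everything when $s\ll\delta$. The paper instead takes a ball of radius $\sim\delta=\epsilon^{1/8}$ centered at the bad point $p$ \emph{regardless} of how close $p$ is to $\SS_0$: on that ball $|k|\gtrsim\epsilon^{1/8}$ by the $C^1$ bound, and the Hopf-lemma estimate \eqref{pro1} gives $n\gtrsim r-1\gtrsim\epsilon^{1/8}$ on the half of the ball with $r-1\gtrsim\epsilon^{1/8}$, so
\begin{equation*}
\int_{B(p)}n|k|^2\,d\mu\;\gtrsim\;\epsilon^{3/8}\cdot\epsilon^{1/8}\cdot\epsilon^{2/8}\;=\;\epsilon^{3/4}\;\gg\;\epsilon,
\end{equation*}
contradicting the integral bound even when $p\in\SS_0$. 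In other words, the linear degeneration of $n$ at the horizon costs only one extra factor of $\epsilon^{1/8}$, which the exponent $1/8$ is precisely chosen to absorb; no smallness of $k$ on $\SS_0$ is needed as an input. If you replace your collar argument by this single uniform contradiction argument, your proof closes and coincides with the paper's.
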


\begin{proof}[Proof of Lemma \ref{LemmaP2}] We combine the identities \eqref{df5} and \eqref{df7} to derive the formula
\begin{equation}\label{pro21}
n|k|^2=k^{ij}nk_{ij}=\nabla^iX^jk_{ij}=\nabla^i(X^jk_{ij}),
\end{equation}
along $\Sigma_1$. Since $X=\T$ along $\mathcal{S}_0$, it follows from \eqref{Tsmall} that
\begin{equation*}
\sum_{i=1}^3\|X^jk_{ij}\|_{L^\infty(\mathcal{S}_0)}\lesssim\epsilon.
\end{equation*}
Moreover, using the asymptotic flatness assumption \eqref{As-Flat} and the definitions it is easy to see that
\begin{equation*}
k_{ij}=O_6(r^{-2}),\qquad\text{ in }\Sigma_{R_1}\text{ for any }i,j\in\{1,2,3\}.
\end{equation*}
Therefore, we can integrate by parts along the surface $\Sigma_1$ to conclude that
\begin{equation}\label{pro22}
\int_{\Sigma_1}n|k|^2\,d\mu\lesssim\epsilon.
\end{equation}

We can now prove the pointwise bound \eqref{pro20}. Assume, for contradiction, that $|k_{ij}(p)|\geq \epsilon^{1/8}$ for some $i,j\in\{1,2,3\}$ and $p\in\Sigma_1$. In view of the smoothness assumption, it follows that there is a constant $C=C(\overline{A})$ sufficiently large such that $|k_{ij}(p')|\geq \epsilon^{1/8}/2$ for all points $p'\in B(p):=\{p'\in\Sigma_1:|p-p'|\leq C^{-1}\epsilon^{1/8}\}$. In addition, using \eqref{pro1},
\begin{equation*}
\int_{B(p)}n\,d\mu\gtrsim\epsilon^{4/8}.
\end{equation*}
Therefore
\begin{equation*}
\int_{B(p)}n|k|^2\,d\mu\gtrsim\epsilon^{6/8},
\end{equation*}
which contradicts \eqref{pro22}. This proves the desired pointwise bound \eqref{pro20}.
\end{proof}

Finally, we show that $X$ stays small along the surface $\Sigma_1$.

\begin{lemma}\label{LemmaP3}
We have
\begin{equation}\label{pro30}
\sup_{p\in \Sigma_1\setminus\Sigma_{R_1}}\Big[\sum_{j=1}^3|X_j(p)|+\sum_{i,j=1}^3|\nabla_iX_j(p)|\Big]\leq \epsilon^{1/30}.
\end{equation}
\end{lemma}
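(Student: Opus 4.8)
The plan is to recover $X$ from the system \eqref{df7}, $\nabla_iX_j+\nabla_jX_i=2nk_{ij}$, given that the right-hand side is already known to be small by Lemmas \ref{LemmaP1} and \ref{LemmaP2}, and that $X=\T$ on $\mathcal{S}_0$ so $X$ itself is small on the inner boundary by \eqref{Tsmall}, while $X=O_6(r^{-2})$ on the outer region $\Sigma_{R_1}$ by \eqref{pro2}. The key point is that \eqref{df7} is a first-order overdetermined elliptic (Killing-type) system for $X$ on the Riemannian manifold $\Sigma_1$, so $X$ is controlled in terms of the $L^\infty$ norm of its symmetrized gradient $2nk_{ij}$ together with boundary data. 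I would first differentiate \eqref{df7} and commute derivatives to obtain a genuine second-order elliptic equation for each component: taking $\nabla^i$ of \eqref{df7} and using $\nabla^ik_{ij}=0$ from \eqref{df5} together with $\nabla^i(n k_{ij})=(\nabla^i n)k_{ij}$ yields a Hodge/Bochner-type identity of the schematic form $\Delta X_j + \mathrm{Ric}\cdot X = 2(\nabla^i n)k_{ij} - \nabla_j(\text{something})$; more cleanly, one differentiates the Killing operator to get $\Delta X_j + R_{j}{}^{m}X_m = 2\nabla^i(nk_{ij}) - \nabla_j(h^{im}2nk_{im}) = 2(\nabla^in)k_{ij}$, using $h^{im}k_{im}=0$. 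In vacuum $\mathrm{Ric}$ of $\Sigma_1$ is controlled by $k$ and the spacetime curvature, hence by $\overline{A}$, and in fact $|\mathrm{Ric}|\lesssim |k|^2\lesssim\epsilon^{1/4}$; the inhomogeneity $(\nabla^i n)k_{ij}$ is bounded by $\lesssim\|k\|_{L^\infty}\lesssim\epsilon^{1/8}$ once one knows $\nabla n$ is bounded, which follows from the lapse equation \eqref{Lapllapse}, elliptic estimates, and $0\le n<1$.

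Next I would run standard elliptic theory on the compact-with-boundary region $\Sigma_1\setminus\Sigma_{R_1}$ (whose geometry is uniformly controlled by $\overline{A}$): from the scalar elliptic equation $\Delta X_j = -R_j{}^mX_m + 2(\nabla^in)k_{ij}$ with boundary data $X|_{\mathcal S_0}=\T|_{\mathcal S_0}$ of size $\lesssim\epsilon$ and $X|_{\partial\Sigma_{R_1}}=O(R_1^{-2})$ of size $\lesssim\epsilon$ after also using that the tail $\Sigma_{R_1}$ contributes a controlled small amount (here I may need to first get an $L^2$ bound on $X$ over $\Sigma_1$ by pairing \eqref{df7} with $X$ and integrating by parts, exploiting the decay of $X$ at infinity), the maximum principle plus $W^{2,p}$ and Schauder estimates give $\|X\|_{L^\infty}\lesssim\epsilon + \epsilon^{1/8}\|X\|_{L^\infty}$, hence $\|X\|_{L^\infty}\lesssim\epsilon$ for $\epsilon$ small. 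Once $\|X\|_{L^\infty}$ is under control, feeding this back into interior and boundary gradient estimates for the same equation — or directly into \eqref{df7} combined with a Schauder estimate for the first-order system — yields $\|\nabla X\|_{L^\infty}\lesssim\epsilon^{1/8} + \epsilon\lesssim\epsilon^{1/8}$, and any fixed further loss in the exponent (to reach $\epsilon^{1/30}$) is absorbed comfortably. Summing the two contributions gives the bound \eqref{pro30}.

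The main obstacle I anticipate is not the interior elliptic estimate but handling the boundary and the asymptotic ends correctly: one must be careful that the elliptic estimates on $\Sigma_1\setminus\Sigma_{R_1}$ are uniform (all constants depending only on $\overline{A}$), which requires the uniform control of the metric $h$, its curvature, and the injectivity radius that is built into the quantitative setup of subsection \ref{precisesmall}; and one must correctly transfer the smallness of $X$ from $\mathcal S_0$ (where $X=\T$) and its decay on $\Sigma_{R_1}$ into a global smallness statement, rather than merely a boundary statement, which is where the auxiliary $L^2$ energy identity obtained by contracting \eqref{df7} with $X$ and integrating over all of $\Sigma_1$ (using $X=O(r^{-2})$ to kill the boundary term at infinity) does the essential work. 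A secondary technical point is justifying the commutation identity producing the scalar equations for $X_j$ in a way that keeps the curvature terms on the correct side; this is routine Bochner bookkeeping but must be done with the vacuum Einstein equations and the Gauss equation \eqref{df3} in hand.
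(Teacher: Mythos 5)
Your elliptic strategy contains two gaps, and at least the first is fatal as written. First, the claim $|\mathrm{Ric}(h)|\lesssim |k|^2\lesssim\epsilon^{1/4}$ is false: by the contracted Gauss equation in vacuum, $R_{ij}=\mathbf{R}_{0i0j}+k_i{}^mk_{mj}-(\mathrm{tr}\,k)k_{ij}$, and the electric part $\mathbf{R}_{0i0j}$ of the spacetime Weyl curvature is of unit size near the horizon (already in Schwarzschild, where $k\equiv 0$, the maximal slice has $\mathrm{Ric}\sim M/r^3$). Hence the zeroth-order term $R_j{}^mX_m$ in your Bochner equation is not a small perturbation: the absorption $\|X\|_{L^\infty}\lesssim\epsilon+\epsilon^{1/8}\|X\|_{L^\infty}$ does not hold, the maximum principle does not apply to $\Delta+\mathrm{Ric}$ (the potential has no sign), and you would need to exclude a kernel of this operator with bounds depending only on $\overline{A}$ --- a nontrivial statement you have not supplied. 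Second, your outer boundary datum is not small: \eqref{pro2} gives $|X|\sim R_1^{-2}$ on $\partial\Sigma_{R_1}$ (morally $|X|\sim J/r^2$, and a priori $J$ can be of size $M^2$); this is a fixed constant depending on $\overline{A}$, not $\lesssim\epsilon$. The smallness of $X$ near $r=R_1$ is part of what the lemma asserts and must be \emph{propagated} outward from $\mathcal{S}_0$; assuming it is circular. Your fallback $L^2$ identity (contracting \eqref{df7} with $\nabla X$ and integrating by parts) does not rescue this, because the same unsigned, unit-size Ricci term reappears and there is no smallness with which to absorb $\|X\|_{L^2}^2$.

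The paper's proof avoids ellipticity entirely. It first upgrades \eqref{pro20} to $\|\nabla k\|_{L^\infty}\lesssim\epsilon^{1/20}$ by interpolation, then shows that both $X$ and \emph{all} components of $\nabla X$ are small on $\mathcal{S}_0$ (the tangential derivatives by interpolation from $|X|\lesssim\epsilon$ on $\mathcal{S}_0$, the remaining components from the symmetrized equation $\nabla_iX_j+\nabla_jX_i=2nk_{ij}=0$ on $\mathcal{S}_0$, since $n$ vanishes there). It then uses the exact second-order identity $\nabla_a\nabla_bX_c=X^dR_{dabc}+\tfrac12(\nabla_a\pi_{bc}+\nabla_b\pi_{ac}-\nabla_c\pi_{ab})$ with $\pi=2nk$ as a linear transport system for the pair $(X,\nabla X)$ along radial curves from $\mathcal{S}_0$ out to $r=R_1$; since the curvature is merely \emph{bounded} (by $\overline{A}$) and the source $\nabla\pi$ is small, Gr\"onwall over the bounded distance propagates the first-order smallness at $\mathcal{S}_0$ to all of $\Sigma_1\setminus\Sigma_{R_1}$. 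If you wish to salvage an elliptic argument you would at minimum have to establish coercivity of $\Delta+\mathrm{Ric}$ on the exterior region with constants depending only on $\overline{A}$; the transport argument is both more elementary and sidesteps this entirely.
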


\begin{proof}[Proof of Lemma \ref{LemmaP3}] We show first that
\begin{equation}\label{pro31}
\|\nabla_lk_{ij}\|_{L^\infty(\Sigma_1)}\leq \epsilon^{1/20}
\end{equation}
for any $l,i.j\in\{1,2,3\}$. Indeed, assume for contradiction that $|\nabla_lk_{ij}(p)|\geq \epsilon^{1/20}$ for some point $p\in \Sigma_1$. Then, using the smoothness assumption and the bound \eqref{pro20}, it follows that $|\partial_lk_{ij}(p')|\geq \epsilon^{1/20}/2$ for all points $p'\in \Sigma_1$ with the property that $|p'-p|\leq \epsilon ^{1/18}$. Therefore there is a point $p'\in\Sigma_1$ with the property that $|p'-p|\leq \epsilon ^{1/18}$ and $|k_{ij}(p')-k_{ij}(p)|\geq \epsilon^{1/9}$. This contradicts the bound \eqref{pro20}.

The vector-field $X$ satisfies the approximate Killing equation
\begin{equation}\label{pro35}
\nabla_iX_j+\nabla_jX_i=2nk_{ij}.
\end{equation}
Recall also that $|X_j(p)|\lesssim\epsilon$ for $p\in\mathcal{S}_0$ and $j\in\{1,2,3\}$. The same interpolation argument as above shows that
\begin{equation}\label{pro36}
\|h(\nabla_VX,\partial_j)\|_{L^\infty(\mathcal{S}_0)}\lesssim\epsilon^{1/3},
\end{equation}
for any $j\in\{1,2,3\}$ and any vector-field $V=V^1\partial_1+V^2\partial_2+V^3\partial_3$ tangent to the bifurcation sphere $\mathcal{S}_0$, satisfying  $\sum_{i=1}^3\|V^i\|_{L^\infty(\mathcal{S}_0)}\leq 1$. Moreover, using \eqref{pro35},
\begin{equation*}
h(\nabla_WX,W)=0\qquad\text{ on }\mathcal{S}_0,
\end{equation*}
since $n$ vanishes along $\mathcal{S}_0$. Combining with \eqref{pro36} it follows that 
\begin{equation*}
\|h(\nabla_{\partial_i}X,\partial_j)\|_{L^\infty(\mathcal{S}_0)}\lesssim\epsilon^{1/3},
\end{equation*}
for any $i,j\in\{1,2,3\}$. Therefore
\begin{equation}\label{pro37}
\sum_{j=1}^3\|X_j\|_{L^\infty(\mathcal{S}_0)}+\sum_{i,j=1}^3\|\nabla_iX_j\|_{L^\infty(\mathcal{S}_0)}\leq \epsilon^{1/4}.
\end{equation}

To prove the desired estimate \eqref{pro30} we need to extend the inequality \eqref{pro37} from the bifurcation sphere $\mathcal{S}_0$ to the region $\Sigma_1\setminus\Sigma_{R_1}$. We use the equation \eqref{pro35}, which is an approximate Killing equation for $X$ along $\Sigma_1$. The argument we present below is a quantitative version of the well-known argument showing that a Killing vector-field vanishes identically in a connected open set if it vanishes up to order $1$ at one point.

More precisely, let 
\begin{equation*}
\pi_{ij}:=\nabla_iX_j+\nabla_jX_i=2nk_{ij},
\end{equation*}
and recall the general formula
\begin{equation*}
\nabla_a\nabla_bX_c=X^dR_{dabc}+(1/2)(\nabla_a\pi_{bc}+\nabla_b\pi_{ac}-\nabla_c\pi_{ab}).
\end{equation*}
Therefore, in view of \eqref{pro20} and \eqref{pro31},
\begin{equation}\label{pro38}
\|\nabla_l\nabla_iX_j-X^dR_{dlij}\|_{L^\infty(\Sigma_1\setminus\Sigma_{R_1})}\lesssim\epsilon^{1/20}.
\end{equation}

Assume now that $p=\Phi_0(r_0\omega)$ is a point in $\Sigma_1\setminus\Sigma_{R_1}$, $r_0\in[1,R_1]$, $\omega\in\mathbb{S}^2$. Let $p'=\Phi_0(\omega)\in\mathcal{S}_0$ and $\gamma:[0,1]\to \Sigma_1\setminus\Sigma_{R_1}$, $\gamma(t)=\Phi_0[(1+(r_0-1)t)\omega]$ denote a curve connecting the points $p'$ and $p$. Let $V(t)=\dot{\gamma}(t)$ denote the vector-field tangent along the curve $\gamma$. In view of \eqref{pro38}, the functions $\nabla_iX_j$ and $X_j$ satisfy the system of transport equations
\begin{equation*}
\begin{split}
&\nabla_VX_j-V^i\nabla_iX_j=0,\\
&\|\nabla_V\nabla_iX_j-X_dV_lR^{dl}_{\,\,\,\,\,\,ij}\|_{L^\infty}\lesssim\epsilon^{1/20}.
\end{split}
\end{equation*}
along the curve $\gamma$. The desired bound \eqref{pro30} follows using also \eqref{pro37}.
\end{proof}

We can now complete the proof of Proposition \ref{propo}

\begin{proof}[Proof of Proposition \ref{propo}] The formula $\T=nT_0+X$ shows that
\begin{equation*}
\g(\T,\T)=-n^2+h(X,X)\qquad\text{ along }\Sigma_1.
\end{equation*}
Using \eqref{pro1} and \eqref{pro30} it follows that 
\begin{equation*}
\g(\T,\T)\leq -\eps\qquad\text{ in }\Sigma_{1+\epsilon^{1/100}}.
\end{equation*}

On the other hand, the main theorem in \cite{AlIoKl} guarantees the existence of a rotational Killing vector-field $\Z$ in a region $\Omega$, which contains the set $\Sigma_1\setminus\Sigma_{1+\rho}$ for some constant $\rho=\rho(\overline{A})>0$. The conclusion of the proposition follows.
\end{proof}

\end{document}